\documentclass[a4paper,10pt]{article}
\usepackage[utf8]{inputenc}

\usepackage[margin=1in]{geometry}

\usepackage{xfrac}

\usepackage{authblk}

\usepackage{algorithm}
\usepackage[noend]{algorithmic}

\newenvironment{mechanism}[1][htb]{%
    \floatname{algorithm}{Mechanism}
    \begin{algorithm}[#1]%
}{\end{algorithm}}

\usepackage{amsmath}
\usepackage{amssymb}
\usepackage{amsthm}
\usepackage{dsfont}

\theoremstyle{definition}

\newtheorem{definition}{Definition}

\newtheorem{claim}{Claim}

\theoremstyle{plain}
\newtheorem{theorem}{Theorem}
\newtheorem{lemma}{Lemma}

\newcommand{\bs}{\boldsymbol}
\newcommand{\mc}{\mathcal}

\newcommand{\R}{\mathbb{R}}
\newcommand{\1}{\mathds{1}}

\newcommand{\ui}[3]{u_{#1}(#2 \, ; \, #3)}

\newcommand{\SW}[1]{\text{SW}(#1)}
\newcommand{\SC}[1]{\text{SC}(#1)}

\DeclareMathOperator*{\argmax}{argmax}
\DeclareMathOperator*{\argmin}{argmin}

\DeclareMathOperator*{\image}{image}

\usepackage{xparse}
\NewDocumentCommand{\set}{m g}{%
  \IfNoValueTF{#2}
    {%
        \{ \, #1 \, \}%
    }    
    {%
        \{ \, #1 \, : \, #2 \, \}%
    }%
}

\newcounter{mainresult}[section]
\newenvironment{mainresult}[1][]{%
    \refstepcounter{mainresult}\par\medskip
    \begin{center}
    \begin{minipage}[c]{0.85\linewidth}
       \noindent \textbf{Main Result~\themainresult #1 (informal).} \rmfamily}
    {\end{minipage} \end{center} \medskip}

\usepackage{hyperref}
\usepackage[nameinlink]{cleveref}

\crefname{mechanism}{mechanism}{mechanisms}
\crefname{claim}{claim}{claims}
\crefname{property}{property}{properties}

\title{Willy Wonka Mechanisms}

\date{February 13, 2024}

\author{
Thomas Archbold \quad
Bart de Keijzer \quad
Carmine Ventre}

\affil{King's College London\\United Kingdom\\
\texttt{
\string{%
\href{mailto:thomas.archbold@kcl.ac.uk}{thomas.archbold},%
\href{mailto:bart.de_keijzer@kcl.ac.uk}{bart.de\_keijzer},%
\href{mailto:carmine.ventre@kcl.ac.uk}{carmine.ventre}%
\string}%
@kcl.ac.uk}}

\begin{document}

\maketitle

\begin{abstract}
Bounded rationality in mechanism design aims to ensure incentive-compatibility for agents who are cognitively limited.
These agents lack the contingent reasoning skills that traditional mechanism design assumes, and depending on how these cognitive limitations are modelled this alters the class of incentive-compatible mechanisms.
In this work we design mechanisms without any ``obvious'' manipulations for several auction settings that aim to either maximise revenue or minimise the compensation paid to the agents.
A mechanism without obvious manipulations is said to be \emph{not obviously manipulable (NOM)}, and assumes agents act truthfully as long as the maximum and minimum utilities from doing so are no worse than the maximum and minimum utilities from lying, with the extremes taken over all possible actions of the other agents.
We exploit the definition of NOM by introducing the concept of \emph{golden tickets} and \emph{wooden spoons}, which designate bid profiles ensuring the mechanism's incentive-compatibility for each agent.
We then characterise these ``Willy Wonka'' mechanisms, and by carefully choosing the golden tickets and wooden spoons we use this to design revenue-maximising auctions and frugal procurement auctions.
\end{abstract}

\section{Introduction}

\label{sec:introduction}

In algorithmic mechanism design we are tasked with designing systems that elicit some information from a set of agents in order to return an outcome from some feasible set.
Each agent holds some piece of private information reflecting the true state of the world, and agents are assumed to be selfish, meaning that they may lie about their piece of information (also known as their type) if it is beneficial to do so.
The mechanisms we design must therefore be robust against the selfish actions of the agents and carefully consider the agents' incentives.
These incentives are modelled by solution concepts.
Strategyproofness, otherwise known as dominant-strategy incentive-compatibility, is a solution concept which stipulates that for each agent it is a dominant strategy to report her true type to the mechanism over any deviation.
This is desirable from a theoretical standpoint since if a mechanism is strategyproof then it guarantees that regardless of the joint action profile of the other players an agent will always maximise her utility by reporting her type truthfully rather than by lying to the mechanism.
Strategyproofness is a viable solution concept only when agents are \emph{perfectly} rational: in order to conclude that acting truthfully is indeed optimal an agent must have a detailed knowledge of the mechanism and correctly reason about all possible states that may result from her actions by considering the actions of the other agents.
This may place too high a cognitive burden on the agents to be useful in practical mechanisms.

Bounded rationality has seen growing interest in the mechanism design literature and considers agents who are cognitively limited.
\cite{Li2017} introduced a strengthening of strategyproofness known as ``obvious strategyproofness'' which assumes that agents will only tell the truth if it is obvious to do so.
Roughly a mechanism is \emph{obviously strategyproof (OSP)} if the minimum utility that an agent can achieve from telling the truth is no worse than the maximum utility that she can achieve from reporting her type dishonestly.
Troyan and Morrill take a more optimistic approach towards cognitively-limited agents, motivated by evidence \cite{Pathak2008,Dur2018} gathered from mechanisms in practice for school-choice and two-sided matching that agents in manipulable (i.e., non-strategyproof) mechanisms might fail to recognise when it is beneficial to lie, and introduce a weaker version of incentive compatibility for these agents known as ``non-obvious manipulability'' \cite{Troyan2020}.
Under this solution concept it is assumed that agents will only lie if it is obvious, meaning that the maximum (respectively, minimum) utility resulting from a lie is strictly greater than the maximum (respectively, minimum) utility resulting from reporting truthfully to the mechanism; otherwise, agents are assumed to report their types truthfully.
The extrema of the agents' utility functions are taken over all possible joint profiles of the other agents.
When a mechanism has no such ``obvious manipulations'' then it is said to be \emph{not obviously manipulable (NOM)}.

This definition is motivated by empirical evidence \cite{Pathak2008,Dur2018} from school choice and two-sided matching that people facing manipulable (i.e., non-strategyproof) mechanisms fail to recognise when to lie.
Students participating in the Boston Mechanism, for example, can guarantee a spot at their second-choice school by misreporting her preferences, whereas hospitals may end up being allocated a doctor it finds undesirable when attempting to manipulate the Deferred Acceptance algorithm.
With NOM thus defined they apply it to a variety of mechanism design settings both with and without monetary transfers, including school choice, two-sided matching, auctions, and bilateral trade.

In revenue maximisation, we are interested in designing incentive-compatible mechanisms that also provide guarantees on the profits that can be extracted from the bidders.
To analyse the performance of a mechanism in a prior-free setting we can compare the revenue that the mechanism extracts from the bidders against a variety of benchmarks.
For example, the ``optimal omniscient benchmark'' is a measure of how much profit the optimal mechanism extracts from the bidders when it knows the bids in advance, while the ``optimal fixed-price benchmark'' measures the revenue that can be extracted when each bidder is offered the same ``take-it-or-leave-it'' price.
Other benchmarks can be defined to reflect, for example, the revenue extracted by an auction that sets a single price and must sell at least some number of copies $m$ of the item, or the revenue extracted by an auction that must offer a \emph{monotone} price vector with respect to the ordering of the bidders according to their valuation.
\cite{Goldberg2006} introduced the notion of ``competitive'' auctions, adapted from the analysis of online algorithms, in the setting of prior-free auctions for digital goods.
An auction is said to be competitive to some benchmark if on every bid profile the revenue extracted by the mechanism is within a constant factor of that specified by the benchmark.
They focus on dominant strategy incentive compatibility and show that no truthful auction in this setting is competitive to even the optimal fixed-price benchmark, much less the optimal omniscient benchmark.
To this end, in this paper we study the extent to which we can design competitive auctions for digital goods and a variety of other settings by relaxing the strategyproofness requirement to non-obvious manipulability.

\medbreak 
\textbf{Related Work.}
\cite{Troyan2020} introduce non-obvious manipulability for direct-revelation mechanisms and provide a characterisation, similarly to \cite{Li2017}, for such mechanisms as those whose profitable deviations may be recognised by a cognitively limited agent that is unable to engage in contingent reasoning.
They apply this to the settings of school choice, two-sided matching, auctions, and bilateral trade and use their framework to classify mechanisms in these settings as either obviously manipulable or not obviously manipulable.

NOM has since been studied in a range of different contexts.
\cite{Aziz2021} look at obvious manipulations in the context of voting and study the conditions necessary for certain voting rules to be NOM.
They also look at computational issues related to computing obvious manipulations when they exist, reducing the problem to that of determining whether there exists a way for a set of voters to vote, given a set of votes that have already been submitted, in order to elect a given candidate, yielding a polynmomial-time algorithm for the $k$-approval voting rule.
\cite{Arribillaga2024} also study the non-obvious manipulability of voting rules and restrict attention to ``tops-only'' rules, which only consider each agent's top preference when selecting an outcome.
They first characterise the rules without obvious manipulations in this general setting, then restrict attention to two subclasses of these rules known as median voter schemes and voting-by-committees to provide a more fine-grained classification of the set of tops-only rule which are NOM.

\cite{Ortega2022} apply non-obvious manipulability to indirect mechanisms, specifically cake-cutting, and show that, unlike strategyproofness, NOM is compatible with a notion of fairness known as proportionality.
\cite{Psomas2022} also apply NOM to the setting of fair division and study deterministic mechanisms for allocating indivisible goods to agents with additive valuations.
They focus on envy-freeness up to one good (EF1) and show that while NOM mechanisms exist for maximising social welfare, the same is not true for maximising egalitarian or Nash welfare.
They also reduce the problem of designing NOM and EF1 mechanisms to that of designing EF1 algorithms, where the reductions preserve the efficiency guarantees.

\cite{AdKV2023a} study how to design NOM mechanisms using monetary transfers, providing characterisations for the class of allocation functions that are implementable in general domains, and then by focusing on single-parameter domains they recover an analogue to the monotonicity condition for allocation functions that are truthfully implementable.
They apply this to bilateral trade, and while any efficient, individually rational, budget balanced mechanism was known to be obviously manipulable, this issue persists even for approximate budget balance.
This line of work is extended in \cite{AdKV2023b} to study the class of allocation functions implementable with payments as indirect NOM mechanisms.
They prove an analogous result to the revelation principle for single-parameter agents: for any allocation function implementable by an indirect mechanism there is an equivalent direct mechanism that implements it.

\medbreak 
\textbf{Our Contribution.}
In this paper we study the extent to which it is possible to design NOM mechanisms that perform well for revenue maximisation and frugal procurement.
In the former setting we wish to allocate goods to the agents with the aim of maximising the social welfare as well as the revenue extracted from the bidders.
In the latter we instead allocate chores and must compensate the bidders for being selected, and we wish to design an incentive-compatible mechanism while minimising the sum of payments made to the bidders.

Starting with goods auctions in which agents are single-parameter we identify two necessary and sufficient conditions for mechanisms to be NOM in this context.
These conditions state that for each agent and each valuation there must be some valuation profile of the other bidders in which she wins the auction for free and in which she loses.
We refer to these bid profiles and ``golden ticket'' and ``wooden spoon'' profiles.
For single-parameter settings this characterisation holds for any allocation and payment functions with non-zero approximation guarantee to the optimal social welfare.

\begin{mainresult}
    An auction that $\alpha$-approximates social welfare and is $\beta$-competitive against social welfare for any $\alpha,\beta > 0$ is not obviously manipulable if and only if it is a Willy Wonka mechanism.
\end{mainresult}

We can show a slightly weaker result when generalising from binary allocation settings to more general outcome spaces.
Now the golden ticket profile is any input to the mechanism where an agent receives her most valued allocation for free, and this is both necessary and sufficient whenever the allocation function is maximal-in-range.
We show that the wooden spoon profile, again an input resulting in the agent losing the auction, is also sufficient to satisfy the worst-case NOM constraints.

We refer to mechanisms satisfying these golden ticket and wooden spoon properties as ``Willy Wonka'' mechanisms.
Given black box access to an $\alpha$-approximate (or maximal-in-range) allocation function we then design basic payment rules that result in high revenue for the auctioneer, which sacrifices the revenue from the lowest winning bidder in the approximately optimal allocation in order to satisfy the golden ticket property.
Specifically, given an input bid profile we first compute the approximately optimal allocation and charge each winning bidder first-price payments, except the lowest winning bidder whose payment is set to zero.
Of course, if we were to do this on each allocation then we might sacrifice the revenue from the only winning bidder (depending on the feasible set of allocations) and end up with zero revenue.
It suffices to have only one bid profile where this occurs in order to satisfy the best-case NOM constraints.
Therefore we can choose the bid profile corresponding to an allocation that maximises the number of winners as the golden ticket and thus minimise the amount of revenue we sacrifice.
For all remaining bid profiles we will simply recover the optimal revenue.

\begin{mainresult}
    There is a mechanism that is not obviously manipulable and $\alpha (1-1/\tau)$-competitive to the optimal revenue, where $\alpha$ denotes the approximation guarantee of the allocation function and $\tau$ the maximum number of winning bidders in an optimal allocation.
\end{mainresult}

Finally we turn our attention to procurement auctions where agents incur costs for being included in an allocation and agents must therefore be compensated.
Now we aim to minimise the social cost, and we want to design the payments to implement the socially optimal allocation as a NOM mechanism.
We want our payment rules to be frugal, meaning they should not overpay the agents to achieve incentive-compatibility.
To this end we apply our golden ticket and wooden spoon properties to this setting.
The former states that for each agent, on top of being compensated her bid (due to individual rationality), there must be a profile where she is paid an additional and costly sum of money that is sure to maximise her utility.
The latter, again, states that each must also lose the auction.
Our first mechanism is a natural analogue to the Willy Wonka mechanisms given for revenue maximisation, but we show that its performance with respect to frugality against the optimal solution can go to zero.
We then amend the payment rule to modify the (costly) golden ticket profiles and show that the resulting mechanism never overpays the agents by more than a factor of two with respect to the second-lowest social cost.

\begin{mainresult}
    There is a procurement auction that is not obviously manipulable and pays at most twice the cost of the second-best solution.
\end{mainresult}

We note that we change the benchmark from the optimum social cost to that of the second best social cost.
For a given agent, if they can never be allocated at the same time as another agent, then the frugality ratio with respect to the optimal will go to zero if the auction is NOM, and is therefore unavoidable.
We can sidestep this issue by benchmarking the mechanism against the cost of the next best solution, and show that now whenever an agent can only be the sole winner of the allocation our mechanism does not overpay at all with respect to this cost.
We conclude the paper with a brief discussion on the areas for further research that our work leaves open.

\section{Preliminaries}

\label{sec:preliminaries}

\textbf{General setup.}
For a natural number $k$, let $[k]$ denote the set $\{ 1, 2, \ldots, k \}$.
We consider the familiar mechanism design setting where there is a set $[n]$ of agents (or equally, bidders) where each agent $i$ has a piece of private information $t_i$ (also known as her type) taken from some domain $D_i$.
The set of type profiles is denoted by $D = \times_{i \in [n]} D_i$.
We also refer to $D$ as the set of \emph{bid profiles}.
For any $\bs{b} = (b_{1}, b_{2}, \ldots, b_{n}) \in D$ we write $(t_i,\bs{b}_{-i})$ to denote the bid profile obtained by replacing the $i$th coordinate of $\bs{b}$ with $t_i \in D_i$.

\medbreak

\textbf{Auctions.}
We consider mechanisms that use monetary transfers. 
In this setting an auction (mechanism) is a tuple $M = (a,p)$ consisting of an allocation function $a : D \to \mc{A}$, where $\mc{A}$ is some set of feasible allocations, and a payment function $p : D \to \R^{n}$.
On input $\bs{b}$ the output of $M$ is denoted $M(\bs{b})$.
An agent's type describes the utility she gets from each allocation, and we can think of agent $i$'s type as a function $t_{i} : \mc{A} \to \R$.
For a given allocation $A \in \mc{A}$ we assume $A$ can be decomposed into the vector $(A_1, A_2, \ldots, A_n)$, where $A_i$ denotes what is allocated to agent $i$ under $A$.
We also assume each agent derives utility only with regard to her own allocation, hence for each agent $i$ and allocation $A$ we have $t_{i}(A) = t_{i}(A_{i})$.
We will therefore denote by $\mc{A}_{i}$ the set of all unique allocations to agent $i$.
Moreover we denote by $\mu(A)$ the set of bidders allocated under the allocation $A$.
We will also refer to this set as the set of ``winners''.
The notation $\1[\cdot]$ represents the indicator function, and we will use $\1[i \in \mu(A)] = 1$ if $i \in \mu(A)$ and $\1[i \in \mu(A)] = 0$ otherwise.
Given a mechanism $M$ and agent $i$ with type $t_i$ we write $i$'s utility on outcome $M(\bs{b})$ as $\ui{i}{M(\bs{b})}{t_i}$.
All auctions we consider are assumed to be \emph{individually rational}, meaning that for each agent $i$ with type $t_i$ and every partial profile $\bs{b}_{-i}$ it holds that $\ui{i}{M(t_i,\bs{b}_{-i})}{t_i} \ge 0$.

\medbreak
\textbf{Goods and chores.}
Depending on the setting, we assume each agent receives either a non-negative or non-positive utility for each feasible allocation in $\mc{A}$.
We refer to the former as a goods auction and the latter as a procurement auction, or equivalently, an auction in which allocations represent (bundles of) chores.
In a goods auction we extract payments from the bidders and we assume our auctions make \emph{no positive transfers}, meaning no bidder is paid money in addition to their allocation.
Conversely, for a procurement auction agents incur a cost for being allocated and hence must be compensated, and here individual rationality implies that all payments to agents are non-negative.
We will formally introduce these notions, including those related to the optimality of an auction's allocation and payment functions, in their respective Sections \ref{sec:revenue} and \ref{sec:frugality} as needed.

\medbreak
\textbf{Incentive compatibility.}
We are interested in designing in\-cen\-tive-compatible mechanisms for agents with a particular form of imperfect rationality whereby they are only able to compare outcomes of the mechanism at the extremes of their utility function.
A mechanism $M = (a,p)$ is \emph{not obviously manipulable (NOM)} if the following two conditions hold for every $i \in [n]$:
\begin{gather}
  \label{eq:bnom}
  \sup_{\bs{b}_{-i}} \set{\ui{i}{M(t_i,\bs{b}_{-i})}{t_i}} \ge \sup_{\bs{b}_{-i}} \set{\ui{i}{M(b_i,\bs{b}_{-i}}{t_i}}, \\
  \label{eq:wnom}
  \inf_{\bs{b}_{-i}} \set{\ui{i}{M(t_i,\bs{b}_{-i})}{t_i}} \ge \inf_{\bs{b}_{-i}} \set{\ui{i}{M(b_i,\bs{b}_{-i}}{t_i}}.
\end{gather}
If \eqref{eq:bnom} holds then $M$ is \emph{best-case not obviously manipulable (BNOM)} and if \eqref{eq:wnom} holds then it is \emph{worst-case not obviously manipulable (WNOM)}.
We also use the term \emph{incentive compatible} to refer to a mechanism that is NOM.
Any misreport $b_{i}$ such that either \eqref{eq:bnom} or \eqref{eq:wnom} does not hold is said to be an \emph{obvious manipulation} of $M$.

\medbreak
\textbf{The structure of the feasible set.}
The structure of $\mc{A}$ reflects the auction setting under consideration.
We provide several examples here for clarity of exposition.
A set system is a tuple $(E,\mc{S})$ where $E$ is referred to as the ground set and $\mc{S} \subseteq 2^E$ the family of subsets.
In general in a \emph{binary allocation unit-demand} setting each bidder can either be allocated or unallocated in each allocation $A$, and the feasible set of allocations $\mc{A}$ in these settings can be described by the set $\mc{S}$ in the set system $([n],\mc{S})$.
This set $\mc{S}$ will vary depending on the type of auction being run, and can encompass, for example, digital goods auctions \cite{Goldberg2006}, knapsack auctions \cite{AggarwalHartline2006}, and spanning tree auctions \cite{Quadir2016}.
For these settings we can represent each allocation by the vector $A \in \{0,1\}^n$, where $A_i = 1$ if $i$ is allocated and $A_i = 0$ otherwise.
In a multi-unit auction there are $k$ copies of a single item and each agent may therefore be allocated up to $k$ copies (and note that agent types may not be linear functions in the number of items they are allocated).
In this case each allocation can be represented by a vector $A \in \{0,\ldots,k\}^n$, where $A_i$ describes the number of copies allocated to agent $i$ (and clearly we require $\sum_{i \in [n]} A_i \le k$ for $A$ to be a feasible allocation).
As a final example, consider a \emph{combinatorial auction} in which there is a set $S$ of items and each allocation distributes a (possibly empty) bundle of items to each agent.
Here for each allocation $A$ the entry $A_i \subseteq S$ describes the bundle of items allocated to $i$, and therefore the feasible set of allocations can be represented by the set of all $A$ such that $A_i \subseteq S$ for each $i$, and $A_i \cap A_j = \empty$ for each distinct $i,j \in [n]$.
Regardless of the specific setting we assume that for each player there is some feasible allocation in $\mc{A}$ where they do not win, otherwise clearly there is nothing to be done from the perspective of realigning incentives.

\medbreak
\textbf{Bidding languages.}
In general we assume that each agent $i$ reports her type directly to the mechanism by specifying a real number for each unique personal allocation $A_i \in \mc{A}_i$.
To avoid cluttering the notation we will drop the $i$ subscript from $A_i$.
Therefore agent $i$'s type is represented by the vector $t_i = (t_i^A)_{A \in \mc{A}_i}$.
In \Cref{sec:revenue} we assume that for each $A \in \mc{A}$ we have $t_{i}^A \in [0,h]$ and in \Cref{sec:frugality} we assume that $t_{i}^A \in [-h,0]$ for some $h \in \R_+$.
In auctions based on set systems each agent needs only to report a single number to the mechanism, and as such we refer to them in this case as being \emph{single-parameter}.
When this is not the case then agents are in general said to be \emph{multiparameter}.

\section{Revenue Maximisation}
\label{sec:revenue}

In this section we consider goods auctions, where for each agent $i$ and each allocation $A \in \mc{A}$ we have $t_i(A) \ge 0$.
We therefore refer to each agent's type as her \emph{valuation function} and denote the valuation function for agent $i$ as $v_i : \mc{A} \to \R_{\ge 0}$.
This is simply a notational change to differentiate between the two settings of revenue maximisation and frugality.
In this setting the auction mechanism extracts payments from bidders in exchange for their allocation, hence for an auction mechanism $M = (a,p)$ and input bid profile $\bs{b}$ the utility that agent $i$ with valuation $v_i$ will receive is denoted $\ui{i}{M(\bs{b})}{v_i} = v_i(a_i(\bs{b})) - p_i(\bs{b})$.

In this setting we consider allocation functions that (approximately) maximise the social welfare.
Fix a valuation profile $\bs{v}$.
The social welfare of an allocation $A$ with respect to $\bs{v}$ is simply $\SW{A,\bs{v}} = \sum_{i \in [n]} v_i(A_i)$.
An allocation function $a$ is said to $\alpha$-approximate the social welfare for $\alpha \in (0,1)$ if $\SW{a(\bs{v}),\bs{v}} \ge \alpha W^{*}$ for every $\bs{v}$, where $W^* = \max_{A \in \mc{A}} \SW{A,\bs{v}}$ denotes the optimal social welfare of an allocation with respect to $\bs{v}$.

We study the revenue we can extract from the bidders while ensuring the mechanism is NOM.
The revenue of an auction $(a,p)$ on bid profile $\bs{b}$ is simply the sum of the payments $\sum_{i \in [n]} p_i(\bs{b})$.
The performance of a mechanism's payment function will be measured with respect to some benchmark.
Let $\mc{X}(\bs{b})$ be the revenue of some payment benchmark $\mc{X}$ when given bid profile $\bs{b}$.
A payment function $p$ is said to be $\beta$-competitive against $\mc{X}$ for $\beta \in (0,1)$ if for all $\bs{b}$ it holds that $\sum_{i \in [n]} p_i(\bs{b}) \ge \beta \mc{X}(\bs{b})$.

\subsection{Binary allocation}

We begin by characterising the class of auctions which are NOM for single-parameter agents binary allocation settings.
Our characterisation shows that two properties based on the outcomes of the mechanism, known as ``golden ticket'' and ``wooden spoon'' profiles, are necessary and sufficient in order for the mechanism to be NOM.
The former states that, for each bidder and each bid, it is possible, given some bid profile of the other agents, to win the auction for free.
It is straightforward to see that this ensures that \eqref{eq:bnom} is satisfied, since there is a bid profile where each agent receives her highest possible utility (assuming no positive transfers from the mechanism to the agent).
On the other hand, as long as the allocation function has a positive welfare guarantee then the golden ticket property is necessary when the mechanism is BNOM, otherwise she may submit a bid below the assumed minimum payment price and get strictly greater utility in the best case than when bidding truthfully.

\begin{lemma}[Golden ticket for binary allocation]
    \label{lem:golden-ticket-binary-allocation}
    Let $M = (a,p)$ be an auction where the allocation function has any positive approximation $\alpha > 0$ to the optimal social welfare.
    Then $M$ is BNOM if and only if for every agent $i \in [n]$ with valuation $v_i$ there exists a bid profile such that $i$ wins the auction for free.
\end{lemma}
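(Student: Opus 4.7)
The plan is to prove the two directions separately: sufficiency follows from a direct utility upper bound, while necessity exploits the $\alpha > 0$ approximation guarantee together with the standing payment constraints of a goods auction.

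For sufficiency, fix agent $i$ with valuation $v_i \ge 0$ and any misreport $b_i$. Since the auction is a goods auction with no positive transfers we have $p_i \ge 0$ in every profile, so the true utility $v_i \cdot \1[i \in \mu(a(b_i, \bs{b}_{-i}))] - p_i(b_i, \bs{b}_{-i})$ is bounded above by $v_i$: winning contributes $v_i - p_i \le v_i$ and losing contributes $-p_i \le 0 \le v_i$. The golden ticket profile for $v_i$ attains this upper bound under truthful reporting, so $\sup_{\bs{b}_{-i}} \ui{i}{M(v_i, \bs{b}_{-i})}{v_i} = v_i$, and hence \eqref{eq:bnom} holds for any $b_i$.

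For necessity, I would argue the contrapositive: assume there is some $v_i$ with no $\bs{b}_{-i}$ yielding a free win at $v_i$, and construct a misreport violating \eqref{eq:bnom}. Two observations are key. First, individual rationality applied at type $0$ combined with no positive transfers forces $p_i(0, \bs{b}_{-i}) = 0$ for every $\bs{b}_{-i}$, and more generally IR at type $b_i$ gives $p_i(b_i, \bs{b}_{-i}) \le b_i$. Second, the $\alpha > 0$ approximation guarantee implies that when $i$ reports any $b_i > 0$ while the others report $\bs{0}$, the optimal welfare equals $b_i > 0$, so the mechanism must include $i$ in $\mu(a(b_i, \bs{0}))$ --- otherwise the achieved welfare would be $0$, violating $\alpha > 0$. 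Combining these, the misreport $b_i = \eps$ against $\bs{b}_{-i} = \bs{0}$ delivers true utility at least $v_i - \eps$, so the right-hand side of \eqref{eq:bnom} is at least $v_i - \eps$ for every $\eps > 0$. The absence of a golden ticket forces the left-hand side to be strictly below $v_i$, and picking $\eps$ small enough exhibits the required obvious manipulation.

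The main obstacle I anticipate is a technical subtlety in the necessity step: ``no golden ticket'' only guarantees $p_i(v_i, \bs{b}_{-i}) > 0$ on each truthful winning profile, not a uniform positive lower bound, so in principle the truthful supremum could approach $v_i$ as a limit without any single profile attaining it. I would resolve this by reading the golden ticket condition as attainability of the truthful supremum (the natural interpretation in this paper's auction settings); the matching-supremum argument above then forces the existence of a truthful bid profile where $i$ wins and pays zero, which is precisely the golden ticket.
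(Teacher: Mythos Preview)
Your proof is correct and follows the same approach as the paper's: sufficiency via the utility upper bound $v_i$ under no positive transfers, and necessity by forcing $i$ to win on $(b_i,\bs{0})$ via the $\alpha>0$ guarantee together with the IR bound $p_i \le b_i$. The attainability subtlety you flag is real---the paper sidesteps it by directly assuming a uniform lower bound on the winning payment (``whenever $i$ wins the auction she pays at least some positive price $p$'') and then picking a single misreport $0<b_i<p$, which amounts to the same implicit attainability reading you make explicit.
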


\begin{proof}
    ($\implies$)
    Begin by observing that $i$ must always be able to win the auction.
    On the bid profile $(v_i, 0, \ldots, 0)$ a positive approximation to the optimal social welfare is only possible if $i$ is allocated.
    Suppose for contradiction that whenever $i$ wins the auction she pays at least some positive price $p$.
    Her best utility when bidding $v_i$ truthfully is at most $v_i - p$.
    Now let $b_i$ be any misreport such that $0 < b_i < p$ and note that $p \le v_i$ by individual rationality.
    Take the profile $(b_i, 0, \ldots, 0)$ and by the same argument $i$ must win.
    Her utility for this misreport is at least $v_i - b_i \ge v_i - p$, breaking BNOM.
    Therefore $i$ must be able to win the auction for free.
    
    ($\impliedby$)
    Since $i$ can win the auction for free when bidding $v_i$ her best truthful utility is exactly $v_i$.
    This is clearly unbeatable by any misreport if $M$ makes no positive transfers, hence the auction is BNOM.
\end{proof}

Note that as a corollary of \Cref{lem:golden-ticket-binary-allocation} we have that for NOM auctions in which for each allocation the set of winning bidders is a singleton, such as single-item auctions, then the revenue guarantee against the optimal goes to zero when taken over all possible inputs.
To verify that this is the case, consider the Vickrey auction, which is strategyproof and hence also NOM.
The revenue obtained by this mechanism is simply the second highest bid, which is zero whenever there is a single agent $i$ with a positive bid $b_i$ and the remaining agents bid zero.
The optimal revenue is $b_i$, resulting in a $0$-approximation.

Analogously to the golden ticket, the wooden spoon property states that for each bidder and each bid, it is possible to lose the auction.
This guarantees that \eqref{eq:wnom} is satisfied, since now there is a bid profile where each agent receives zero utility and by individual rationality this is the lowest possible when bidding truthfully.
Since this holds for each bid then the worst utility of a dishonest bid can certainly be no greater than zero.
Now to prove the necessity of the wooden spoon we assume the payment function extracts any positive factor of revenue from the bidders and show that if the agent always wins the auction then it is possible to underbid and get a strictly greater worst case utility versus a truthful bid.

\begin{lemma}[Wooden spoon for binary allocation]
    \label{lem:wooden-spoon-binary-allocation}
    Let $M = (a,p)$ be an auction where the payment function guarantees to extract any positive factor $\beta > 0$ of the sum of bids in revenue from the bidders.
    Then $M$ is WNOM if and only if for every agent $i \in [n]$ with valuation $v_i$ there exists a bid profile such that $i$ loses the auction.
\end{lemma}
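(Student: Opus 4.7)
The proof parallels that of \Cref{lem:golden-ticket-binary-allocation} by exchanging suprema for infima and letting the revenue guarantee play the role of the welfare approximation.

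For $(\impliedby)$ I exploit that losing at a profile yields utility exactly $0$: individual rationality applied at the losing outcome together with no positive transfers forces $p_i = 0$, and the allocation contributes nothing. The wooden spoon for $(i, v_i)$ thus pins the worst-case truthful utility at $0$. For any misreport $b_i$, applying the hypothesis to the pair $(i, b_i)$ (treating $b_i$ as a candidate valuation in $D_i$) yields a profile where $i$ loses while bidding $b_i$, on which her utility under true type $v_i$ is also $0$; the worst-case misreport utility is therefore at most $0$ and WNOM holds.

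For $(\implies)$ I argue by contrapositive: assume that $i$ always wins when bidding $v_i$. The revenue guarantee applied to the profile $(v_i, \bs{0}_{-i})$, together with individual rationality and no positive transfers at bid $0$ for every $j \neq i$, forces $p_i(v_i, \bs{0}_{-i}) \geq \beta v_i$ and so the worst-case truthful utility is at most $(1-\beta) v_i$. Picking a misreport $b_i$ with $0 < b_i < \beta v_i$, the analogous revenue-plus-IR reasoning on $(b_i, \bs{0}_{-i})$ forces $i$ to win there (otherwise all payments vanish while revenue must be at least $\beta b_i > 0$), and IR at $b_i$ gives $p_i \leq b_i$; the utility at this profile is therefore at least $v_i - b_i > (1-\beta) v_i$.

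The main obstacle is translating this single-profile advantage into a violation of the worst-case NOM inequality, which compares infima over all $\bs{b}_{-i}$: a single losing profile under bid $b_i$ would collapse the worst-case misreport utility back to $0$. My plan to close this gap is to use the revenue guarantee on arbitrary $(b_i, \bs{b}_{-i})$: if $i$ loses there, her payment is $0$ and each $j \neq i$ pays at most $b_j$, so the guarantee forces $\sum_{j \neq i} b_j \geq \beta b_i / (1 - \beta)$. This structural lower bound on the opponents' bids at any losing profile, combined with the hypothesis that $i$ always wins with bid $v_i$ and a careful choice of $b_i$, rules out such losing profiles for the misreport, so that the infimum of the misreport utility remains strictly above $(1-\beta) v_i$, contradicting WNOM and completing the contrapositive.
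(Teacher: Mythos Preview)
Your $(\impliedby)$ direction is correct and matches the paper's.

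For $(\implies)$, the obstacle you identify is real under your reading of the negation, but your plan to close it does not work. The lower bound $\sum_{j\neq i} b_j \ge \beta b_i/(1-\beta)$ at any profile where $i$ loses with bid $b_i$ gives no contradiction with the hypothesis that $i$ always wins with bid $v_i$: there is no monotonicity assumption on $a$, so nothing prevents an allocation rule with $a_i(v_i,\bs{b}_{-i})=1$ for all $\bs{b}_{-i}$ yet $a_i(b_i,h,\ldots,h)=0$ for $b_i<v_i$, and the bound is comfortably satisfied at $\bs{b}_{-i}=(h,\ldots,h)$. So ``a careful choice of $b_i$'' cannot rule out losing profiles for the misreport, and the infimum under $b_i$ can collapse to $0$; your sketch as written does not establish the contrapositive.

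The paper sidesteps this entirely by taking the negation of the wooden spoon condition to be that $i$ wins on \emph{every} full bid profile (i.e.\ regardless of her own bid, not only when she bids the specific $v_i$). With that reading the obstacle vanishes: when $i$ misreports any $b_i<v_i$ she still always wins, individual rationality at $b_i$ caps the payment at $b_i$, and hence $\inf_{\bs{b}_{-i}} u_i(M(b_i,\bs{b}_{-i});v_i)\ge v_i-b_i$ directly. Combining with the bound $\inf_{\bs{b}_{-i}} u_i(M(v_i,\bs{b}_{-i});v_i)\le(1-\beta)v_i$ obtained at $(v_i,0,\ldots,0)$ and choosing $b_i<\beta v_i$ finishes the argument in one line. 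You should either adopt that reading of the hypothesis (which is what makes the lemma an if-and-only-if with the stated proof), or, if you insist on the per-bid reading, produce an actual argument in place of the unfulfilled ``plan''.
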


\begin{proof}
    ($\implies$)
    Suppose bidder $i$ always wins the auction.
    When bidding $b_i < v_i$ her worst utility is at least $v_i - b_i$.
    On the profile $(v_i, 0, \ldots, 0)$ then $i$ must win the auction and pay at least $\beta v_i$ meaning her worst truthful utility is at most $(1-\beta) v_i$.
    Taking a misreport $b_i < \beta v_i$ results in a worst case utility of $v_i - b_i > (1-\beta) v_i$, breaking WNOM.
    Therefore $i$ must lose the auction.

    ($\impliedby$)
    Take bidder $i$ with value $v_i$.
    Since $i$ can lose then her worst utility when bidding $v_i$ truthfully is at exactly 0.
    If $i$ were to underbid then the same argument gives her a worst dishonest utility of 0, while submitting an overbid $b_i$ can clearly yield a negative utility if $i$ were to win and pay $b_i > v_i$.
    Hence $M$ is WNOM.
\end{proof}

With \Cref{lem:golden-ticket-binary-allocation,lem:wooden-spoon-binary-allocation} we have a full characterisation for NOM auctions in a range of binary allocation settings.
Namely, as long the allocation function has some positive approximation guarantee to the optimal social welfare and the payment function always extracts some positive revenue from the bidders, then for each agent there must be both a bid profile where she wins the auction for free and a bid profile where she loses.
This motivates the following definition for auctions that satisfy these criteria.

\begin{definition}[Willy Wonka mechanism for binary allocation]
    \label{def:willy-wonka-binary}
    A \emph{Willy Wonka mechanism for binary allocation settings} is a mechanism $M = (a,p)$ whose allocation and payment functions satisfy the following:
    \begin{enumerate}
        \item For each $i$ and each $b_i$ there is a bid profile $\bs{b} = (b_i,\bs{b}_{-i})$ where $i$ wins the auction for free.
        \item For each $i$ and each $b_i$ there is a bid profile $\bs{b} = (b_i,\bs{b}_{-i})$ where $i$ loses the auction.
    \end{enumerate}
\end{definition}

Given a bid $b_i$ of player $i$ we refer to the \emph{golden ticket profile for $b_i$} as the profile $\bs{b}_{-i}$ such that $i$ wins for free on $\bs{b} = (b_i,\bs{b}_{-i})$.
Likewise we refer to the \emph{wooden spoon profile for $b_i$} as the profile $\bs{b}_{-i}$ such that $i$ loses the auction on $\bs{b} = (b_i,\bs{b}_{-i})$.
Therefore as long as each player $i$ has both a golden ticket profile and a wooden spoon profile for each bid $b_i$ then the resulting mechanism is NOM.
In the following we will assume that the wooden spoon profile is ensured by the feasible set, meaning that for a sufficiently approximately-optimal allocation function and each bid of an agent $i$ there is a bid profile of the other agents where $i$ must lose the auction.

Our idea is to design Willy Wonka mechanisms that sacrifice a small portion of the revenue in order to achieve incentive compatibility.
Ideally, for each bidder $i$ and each bid $b_i$ we want to find a bid profile $\bs{b}_{-i}$ and an allocation $A$ where $i$ wins that maximises the social welfare of the bidders excluding $i$, since we can charge first-price payments to the winners and award $i$ her allocation for free.
This $\bs{b}_{-i}$ would then be the golden ticket profile for $i$ with bid $b_i$.
In other words for each $i$ with bid $b_i$ the optimal Willy Wonka mechanism would compute an allocation that obtains revenue
\begin{equation}
    \label{eq:optimal-ww-revenue}
    \max_{A \, : \, i \in \mu(A)} \set{ \max_{\bs{b}_{-i}} \, \SW{A,\bs{b}_{-i}} }.
\end{equation}

Suppose we have black box access to an $\alpha$-approximate allocation function.
In order to tackle finding a good approximation to the optimal revenue above, consider the Willy Wonka mechanism that first computes an approximately optimal allocation and then charges first-price payments to all but the lowest winning bidder, who is allocated at no charge.
Then the revenue extracted by the mechanism is simply
\begin{equation*}
    \label{eq:next-best-willy-wonka}
    \alpha \cdot \SW{A,\bs{b}} - \min_{i \in \mu(A)} b_i.
\end{equation*} 

A problem with this mechanism is that it specifies a golden ticket profile for every input $\bs{b}$.
Now if there is any allocation $A$ such that $|\mu(A)|=1$ then we lose all of the revenue, resulting in a $0$-approximation. 
We amend this in \Cref{mech:ww-revenue-binary} by only allocating the lowest bidder for free when the cardinality of the winning set is maximal over all allocations in the image of the allocation function.

\begin{mechanism}
    \caption{Willy Wonka auction for binary allocation.}
    \label[mechanism]{mech:ww-revenue-binary}
    \begin{algorithmic}
        \REQUIRE Bid profile $\bs{b} \in D$.
        \ENSURE Allocation $a(\bs{b}) \in \mc{A} \subseteq \{0,1\}^n$, payments $p(\bs{b}) \in \R_{\ge 0}^n$.
        \STATE Let $A$ be an $\alpha$-optimal allocation w.r.t. $\bs{b}$ and let $\mu = \mu(A)$.
        \FOR{each bidder $i \in [n]$}
        \STATE $a_i(\bs{b}) \gets A_i$
        \STATE $p_i(\bs{b}) \gets b_i \cdot A_i$
        \ENDFOR
        \IF{$A \in \argmax \set{|A'|}{A' \in \text{image}(a)}$}
        \STATE Let $i \in \argmin \set{b_j}{j \in \mu}$
        \STATE $p_i(\bs{b}) \gets 0$
        \ENDIF
        \RETURN $a(\bs{b}), p(\bs{b})$
    \end{algorithmic}
\end{mechanism}

We note a few assumptions made by the mechanism in order to simplify its description.
Firstly, we assume that in each bid profile corresponding to a winning set of maximum cardinality there is a distinct lowest winning bidder in the optimal allocation.
If lowest winning bids are tied then in order to obtain the revenue bound of \Cref{thm:revenue-bound-binary} we would want for only one of these to be allocated for free.
We can break ties arbitrarily when this is the case, and since there is at least one profile where each player is the unique lowest winning bidder for each possible then \Cref{lem:golden-ticket-binary-allocation} ensures the mechanism is BNOM.

Again we note that this payment rule taken together with the allocation algorithm does not necessarily define any wooden spoon profiles.
In most cases the wooden spoon profiles should be ensured by the welfare approximation of the allocation function being high enough and by the feasible set of allocations.
In other words, if it is possible to lose the auction then WNOM is guaranteed by \Cref{lem:wooden-spoon-binary-allocation}.
In settings where this does not hold and it is always possible to allocate a given agent, then in order to achieve WNOM we must specify these profiles explicitly.
In this special case then, again, in order to achieve the revenue bound of \Cref{thm:revenue-bound-binary} we have to take care that the golden ticket profile of one agent does not coincide with the wooden spoon profile of another agent.

\begin{theorem}
    \label{thm:revenue-bound-binary}
    \Cref{mech:ww-revenue-binary} is not obviously manipulable and is $\alpha (1-1/\tau)$-competitive to the optimal revenue, where $\tau = \max \set{|\mu(A)|}{A \in \image(a)}$ denotes the cardinality of the largest winning set under $a$.
\end{theorem}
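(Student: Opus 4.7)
The plan is to prove the theorem in two parts: incentive compatibility by verifying the two Willy Wonka conditions of \Cref{def:willy-wonka-binary}, and the revenue bound by an elementary averaging argument over the winning set.

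For the NOM direction, I would appeal to \Cref{lem:golden-ticket-binary-allocation,lem:wooden-spoon-binary-allocation}. The wooden spoon is immediate from the paper's standing assumption that each agent can lose on some profile, giving WNOM. For the golden ticket, given any agent $i$ and bid $b_i$, the task is to exhibit a profile $\bs{b}_{-i}$ on which the $\alpha$-optimal allocation $A$ returned by the black-box algorithm has $|\mu(A)| = \tau$ with $i$ as its unique lowest winning bidder; on such a profile the mechanism sets $p_i(\bs{b}) = 0$ and $i$ wins for free, yielding BNOM. The tie-breaking caveat stated immediately after the mechanism's description is precisely the hypothesis that lets us find such a profile for every $(i, b_i)$ pair.

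For the revenue guarantee, fix $\bs{b}$, let $A$ be the returned allocation, and set $W^* = \max_{A' \in \mc{A}} \SW{A',\bs{b}}$; since any individually rational mechanism that makes no positive transfers cannot extract more revenue than $W^*$, this serves as the optimal-revenue benchmark. I would split into two cases based on $|\mu(A)|$. If $|\mu(A)| < \tau$, no payment is waived, so the revenue equals $\SW{A,\bs{b}} \ge \alpha W^* \ge \alpha(1 - 1/\tau) W^*$. If $|\mu(A)| = \tau$, let $i^* \in \argmin_{j \in \mu(A)} b_j$; since in the single-parameter binary setting $\SW{A,\bs{b}} = \sum_{j \in \mu(A)} b_j$, the minimum of $\tau$ non-negative bids is at most their average, so $b_{i^*} \le \SW{A,\bs{b}}/\tau$ and the revenue is
\[
\SW{A,\bs{b}} - b_{i^*} \;\ge\; (1 - 1/\tau)\,\SW{A,\bs{b}} \;\ge\; \alpha(1 - 1/\tau) W^*.
\]

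The main subtlety, rather than a genuine obstacle, is the golden ticket construction: because the allocation function is accessed only as a black box, it is not automatic that every $(i,b_i)$ pair admits a profile in which $i$ is the unique lowest winner of a $\tau$-cardinality allocation, and this is precisely why the paper folds in the tie-breaking assumption. Once that is granted, both parts of the argument are short, with the crux of the revenue bound being the elementary observation that the smallest of $\tau$ winning bids cannot exceed their average.
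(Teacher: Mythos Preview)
Your proposal is correct and follows essentially the same approach as the paper's proof, which appeals to \Cref{lem:golden-ticket-binary-allocation,lem:wooden-spoon-binary-allocation} for NOM and then observes that dropping the lowest winning bid costs at most a $1/\tau$ fraction of the $\alpha$-approximate welfare. Your version is simply more explicit: the paper compresses the revenue argument into a single sentence without the case split or the averaging inequality written out, and it leaves the golden-ticket existence to the surrounding discussion exactly as you note.
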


\begin{proof}
    The mechanism is NOM by \Cref{lem:golden-ticket-binary-allocation,lem:wooden-spoon-binary-allocation}.
    It charges first-price payments to all winning bidders apart from the lowest when the number of winners is maximal, meaning the revenue is a $(1-1/\tau)$ factor of the social welfare of the returned allocation.
    Since the allocation function is $\alpha$-approximate then the social welfare of the allocation is at least an $\alpha$ fraction of the optimal, giving the bound.
\end{proof}

In the next section we generalise the arguments in \Cref{lem:golden-ticket-binary-allocation,lem:wooden-spoon-binary-allocation} to handle general outcome spaces, with an additional condition on the allocation function.

\subsection{General outcome spaces}

In this section we assume $\mc{A}$ represents more general classes of allocations and that agents are multiparameter.
Recall that each agent $i$'s type is now a vector $(v_i^A)_{A \in \mc{A}}$ where $v_i^A \in [0,h]$ represents the utility $i$ receives from allocation $A$.
In order to derive similar golden ticket and wooden spoon properties to the previous section we impose an additional requirement on the allocation function of the mechanism that requires the allocation to be able maximise over a fixed range of allocations.
An allocation function $a$ is said to be \emph{maximal-in-range} if there exists some subset $\mc{R} \subseteq \mc{A}$ such that for all $\bs{b} \in D$ the function outputs $a(\bs{b}) \in \argmax_{A \in \mc{R}} \SW{A,\bs{b}}$.

\begin{lemma}[Golden ticket for general outcome spaces]
    \label{lem:golden-ticket-general}
    Let $M = (a,p)$ be an maximal-in-range auction with range $\mc{R} \subseteq \mc{A}$. 
    Then $M$ is BNOM if and only if for every agent $i \in [n]$ with valuation $v_{i} = (v_{i}^{A})_{A \in \mc{A}}$ there exists a bid profile $\bs{b}$ such that $a(\bs{b}) \in \argmax_{A \in \mc{R}} v_{i}^{A}$ and $p_{i}(\bs{b}) = 0$.
\end{lemma}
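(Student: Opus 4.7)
The plan is to mirror the proof of \Cref{lem:golden-ticket-binary-allocation}, with the maximal-in-range property of $a$ taking over the role previously played by the welfare approximation guarantee. Write $M^{*} = \max_{A \in \mc{R}} v_i^{A}$ and $\mc{R}^{*} = \argmax_{A \in \mc{R}} v_i^{A}$ for agent $i$'s best achievable value and the set of allocations in $\mc{R}$ attaining it.

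For the $(\Leftarrow)$ direction I would interpret the golden ticket profile, as in \Cref{lem:golden-ticket-binary-allocation}, as one in which $i$ reports $v_i$ truthfully, so there is some $\bs{b}_{-i}$ with $a(v_i,\bs{b}_{-i}) \in \mc{R}^{*}$ and $p_i(v_i,\bs{b}_{-i}) = 0$, at which $i$'s utility equals $M^{*}$. Since $a$ only ranges over $\mc{R}$ and the mechanism makes no positive transfers, any misreport is capped at utility $M^{*}$, so the truthful supremum dominates the misreport supremum for every $b_i$ and \eqref{eq:bnom} holds.

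For the $(\Rightarrow)$ direction I would argue the contrapositive. Suppose that every bid profile either has $a(\bs{b}) \notin \mc{R}^{*}$ or $p_i(\bs{b}) > 0$. Applying maximal-in-range to the truthful profile in which every other agent submits the all-zero bid forces $a \in \mc{R}^{*}$, since only $i$ contributes to the welfare sum, so $i$'s payment there is some $p > 0$ and her truthful utility at this profile is at most $M^{*} - p$. To exhibit an obvious manipulation I would consider the misreport $b_i$ with $b_i^{A} = c$ on each $A \in \mc{R}^{*}$ and $b_i^{A} = 0$ elsewhere, for $c$ strictly smaller than the positive payment $i$ would incur at any such profile. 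Maximal-in-range again forces $a(b_i,\bs{0}_{-i}) \in \mc{R}^{*}$, while individual rationality caps $p_i(b_i,\bs{0}_{-i})$ by $c$, yielding utility at least $M^{*} - c$, which strictly exceeds the truthful supremum and contradicts \eqref{eq:bnom}.

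The main obstacle, just as in the binary case, is obtaining a uniform positive lower bound on the payment incurred whenever $i$ receives an allocation in $\mc{R}^{*}$, so that the misreport magnitude $c$ can be chosen strictly below it. I would either adopt the implicit assumption used in \Cref{lem:golden-ticket-binary-allocation} that such a positive infimum is attained, or invoke finiteness of $\mc{A}$ together with individual rationality, so that once $i$'s declared value on $\mc{R}^{*}$ drops below any positive payment ever charged the manipulation is guaranteed to succeed.
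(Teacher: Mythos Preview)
Your proposal is correct and follows essentially the same route as the paper: use the all-zero profile together with maximal-in-range to force an allocation in $\argmax_{A \in \mc{R}} v_i^A$, then construct an underbid below the positive payment to break BNOM, with the converse following from no positive transfers. The infimum issue you flag in your last paragraph is exactly the implicit assumption the paper also makes (phrasing it as ``whenever the allocation $A$ is returned $i$ must pay some positive price $p$''), and your treatment of the case $|\mc{R}^{*}| > 1$ is in fact slightly more careful than the paper's, which fixes a single maximiser $A$.
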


\begin{proof}
  Let $A \in \argmax_{A' \in \mc{R}} v_{i}^{A'}$ be $i$'s highest-valued allocation amongst those in $\mc{R}$ and first note that it must always be possible for $i$ to be allocated $A$.
  Take the bid profile $(v_{i}, \bs{0}, \ldots, \bs{0})$, where $\bs{0} = (0, \ldots, 0)$ denotes the all-zero type.
  Since $M$ is maximal-in-range and $A \in \mc{R}$ then $M$ must return the allocation $A$.

  ($\implies$)
  Now suppose that whenever the allocation $A$ is returned $i$ must pay some positive price $p$.
  Her best truthful utility is at most $v_{i}^{A} - p$.
  Let $b_{i}$ be any misreport of agent $i$ with $0 < b_{i}^{A} < p$ and $b_{i}^{A} = \max_{A' \in \mc{R}} b_{i}^{A'}$.
  Note that $b_{i}^{A}$ is an underbid for allocation $A$ since $p \le v_{i}^{A}$ by individual rationality.
  Now take the profile $(b_{i}, \bs{0}, \ldots, \bs{0})$.
  By the same argument as above $i$ must be allocated $A$ and pay some amount $p'$ where $0 < p' < b_{i}^{A}$.
  Her best utility for this misreport is at least $v_{i}^{A} - b_{i}^{A} > v_{i}^{A} - p$ and hence strictly larger than her best utility for reporting $v_{i}$ truthfully, breaking BNOM.
  Hence $i$ must win $A$ for free under some bid profile.

  ($\impliedby$)
  Since $i$ wins $A$ for free then bidding $v_{i}$ truthfully can result in a utility of $v_{i}^{A}$, which is clearly unbeatable since $M$ makes no positive transfers.
  Therefore $M$ is BNOM.
\end{proof}

We must strengthen the requirement of the allocation function being $\alpha$-optimal to being maximal-in-range.
This ensures that bidder $i$ is allocated her most valuable allocation in the range of the auction when the other bidders are excluded from winning.
Without this requirement then the auction could return any allocation $A$ that $i$ values at least an $\alpha$ fraction as much as her favourite allocation, which may not imply incentive-compatibility.
In this case the performance of the allocation function can be expressed as
\begin{equation}
    \label{eq:mir-alpha}
    \alpha = \min_{\bs{b}} \frac{\argmax_{A \in \mc{R}} \SW{A,\bs{b}}}{\argmax_{A \in \mc{A}} \SW{A,\bs{b}}}.
\end{equation}

The wooden spoon property remains exactly the same and using the exact very simple argument from the proof of \Cref{lem:wooden-spoon-binary-allocation} we can show that if each agent can lose the auction then it must be WNOM.

\begin{lemma}[Wooden spoon for general outcome spaces]
    \label{lem:wooden-spoon-general}
    Let $M = (a,p)$ be an auction for multiparameter agents that is $\beta$-competitive against the social welfare.
    If for every agent $i \in [n]$ with valuation $v_{i} = (v_{i}^{A})_{A \in \mc{A}}$ there exists a bid profile $\bs{b}$ such that $a_{i}(\bs{b}) = \emptyset$ then $M$ is WNOM.
\end{lemma}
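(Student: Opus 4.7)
The plan is to mimic directly the sufficiency direction of \Cref{lem:wooden-spoon-binary-allocation}, which, inspecting its proof, actually does not invoke the $\beta$-competitiveness hypothesis at all: competitiveness is used only on the necessity side, to force a truthful bidder to pay a positive amount. Consequently the $\beta$-competitiveness assumption in the present statement is inessential for the claimed implication and the argument can proceed using only individual rationality, the no-positive-transfers convention of the goods setting (\Cref{sec:revenue}), the convention that $v_i^{\emptyset}=0$, and the wooden spoon hypothesis applied to \emph{any} element of $D_i$ — both the true valuation and the potential misreport.

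I would fix an agent $i$ with true valuation $v_i$ and an arbitrary misreport $b_i \in D_i$, and reduce the worst-case NOM condition \eqref{eq:wnom} to proving
\[
  \inf_{\bs{b}_{-i}} \ui{i}{M(v_i,\bs{b}_{-i})}{v_i} \;\ge\; \inf_{\bs{b}_{-i}} \ui{i}{M(b_i,\bs{b}_{-i})}{v_i}.
\]
The strategy is to show that both sides equal or are upper bounded by $0$. For the left-hand side, individual rationality gives $\ui{i}{M(v_i,\bs{b}_{-i})}{v_i}\ge 0$ for every $\bs{b}_{-i}$, while the wooden spoon hypothesis applied to $v_i$ supplies a profile $\bs{b}_{-i}^{*}$ on which $a_i(v_i,\bs{b}_{-i}^{*})=\emptyset$; combining $v_i^{\emptyset}=0$ with the fact that payments to unallocated bidders must be zero (by IR together with no positive transfers) gives $\ui{i}{M(v_i,\bs{b}_{-i}^{*})}{v_i}=0$, pinning the left-hand infimum at exactly $0$.

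For the right-hand side I would apply the wooden spoon hypothesis again, this time treating $b_i$ as the valuation: there exists $\bs{b}_{-i}'$ with $a_i(b_i,\bs{b}_{-i}')=\emptyset$, so the utility the truthful-type-$v_i$ bidder realises on that profile when reporting $b_i$ is $v_i^{\emptyset} - p_i(b_i,\bs{b}_{-i}') = -p_i(b_i,\bs{b}_{-i}')\le 0$ by no positive transfers. Thus the right-hand infimum is at most $0$, and the required inequality follows. Summing over agents yields WNOM for $M$.

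The only real subtlety — and the step most likely to trip a careful reader — is the reading of the phrase ``for every agent $i$ with valuation $v_i$'': it must be interpreted as a domain-wide property quantifying over all $v_i \in D_i$, so that the hypothesis supplies a losing profile not just for the true valuation but also for the misreport $b_i$. This is exactly the convention implicit in the proof of \Cref{lem:wooden-spoon-binary-allocation}. Beyond that caveat no new difficulty arises in moving from binary to general outcome spaces, in sharp contrast to \Cref{lem:golden-ticket-general}, whose generalisation required the maximal-in-range hypothesis to pin down which allocation the agent must be able to win for free.
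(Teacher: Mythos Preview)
Your proposal is correct and follows essentially the same route as the paper, which explicitly defers to ``the exact very simple argument from the proof of \Cref{lem:wooden-spoon-binary-allocation}'' for the sufficiency direction. Your observation that the $\beta$-competitiveness hypothesis is inessential for this implication is accurate and worth noting; the one cosmetic slip is the phrase ``summing over agents,'' since WNOM is verified agent by agent rather than aggregated, but the intended meaning is clear.
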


We note that unlike \Cref{lem:golden-ticket-binary-allocation,lem:golden-ticket-general,lem:wooden-spoon-binary-allocation} this is not a complete characterisation since we do not have the necessity of the wooden spoon for auctions in these settings to be WNOM.
However, when designing our NOM auctions in the following we will only need the sufficiency of the golden tickets and wooden spoons to show the resulting mechanisms are incentive-compatible.

\begin{definition}[Willy Wonka mechanism for general outcome spaces]
    \label{def:ww-revenue-general}
    A \emph{Willy Wonka mechanism for general outcome spaces} is a mechanism $M = (a,p)$ whose allocation and payment functions satisfy the following:
    \begin{enumerate}
        \itemsep0em
        \item For each $i$ and each $b_i$ there is a bid profile $\bs{b} = (b_i,\bs{b}_{-i})$ where $i$ is allocated $\argmax_{A \in \mc{A}} b_i^A$ for free.
        \item For each $i$ and each $b_i$ there is a bid profile $\bs{b} = (b_i,\bs{b}_{-i})$ where $i$ loses the auction.
    \end{enumerate}
\end{definition}

We only need to modify \Cref{mech:ww-revenue-binary} slightly in order to apply to general outcome spaces, which we provide in \Cref{mech:ww-revenue-general}.
We now assume the allocation function $a$ is maximal-in-range as per \Cref{lem:golden-ticket-general}, but as before in \Cref{mech:ww-revenue-general} we compute an approximately optimal allocation and then charge first-price payments in most cases.
When the allocation has maximal cardinality then we sacrifice the revenue from the lowest winning bidder in order to satisfy the golden ticket property.

\begin{mechanism}[h]
    \caption{Willy Wonka auction for general outcome spaces.}
    \label[mechanism]{mech:ww-revenue-general}
    \begin{algorithmic}
        \REQUIRE Bid profile $\bs{b} \in D$.
        \ENSURE Allocation $a(\bs{b}) \in \mc{A}$, payments $p(\bs{b}) \in \R_{\ge 0}^n$.
        \STATE Let $A \in \argmax \set{\SW{A',\bs{b}}}{A' \in \mc{R}}$ and let $\mu = \mu(A)$.
        \FOR{each bidder $i \in [n]$}
        \STATE $a_i(\bs{b}) \gets A_i$
        \STATE $p_i(\bs{b}) \gets b_i^A \cdot \1[i \in \mu]$
        \ENDFOR
        \IF{$A \in \argmax \set{|A'|}{A' \in \mc{R}}$}
        \STATE Let $i \in \argmin \set{b_j^A}{j \in \mu}$
        \STATE $p_i(\bs{b}) \gets 0$
        \ENDIF
        \RETURN $a(\bs{b}), p(\bs{b})$
    \end{algorithmic}
\end{mechanism}

Again it is assumed that the wooden spoon profiles arise from the feasible set itself, that is, for each agent there is some bid profile such that it is not feasible to allocate the agent.
If this is not the case we would have to artificially exclude the agent from the winning set.
Naturally \Cref{mech:ww-revenue-general} obtains a similar revenue guarantee to \Cref{mech:ww-revenue-binary} of $\alpha (1-1/\tau)$ where $\tau$ is the size of the maximum cardinality winning set for any allocation returned by the mechanism, and $\alpha$ again corresponds to the approximation guarantee of the allocation function with respect to the socially optimal allocation and takes the form given in \eqref{eq:mir-alpha}.

\begin{theorem}
    \label{thm:revenue-bound-general}
    \Cref{mech:ww-revenue-general} is not obviously manipulable and is $\alpha (1-1/\tau)$-competitive to the optimal revenue, where $\tau = \max \set{|\mu(A)|}{A \in \mc{R}}$ denotes the cardinality of the largest winning set under the maximal-in-range allocation function $a$.
\end{theorem}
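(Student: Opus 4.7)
The plan is to follow the two-step structure used in the proof of \Cref{thm:revenue-bound-binary}, substituting the general outcome space versions of the golden ticket and wooden spoon lemmas.

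\emph{Step 1 (NOM).} For best-case NOM I invoke \Cref{lem:golden-ticket-general}: for each agent $i$ with valuation $v_i$, I must exhibit a bid profile on which the mechanism outputs some $A \in \argmax_{A' \in \mc{R}} v_i^{A'}$ and charges $i$ zero. Consider the profile $(v_i, \bs{0}, \ldots, \bs{0})$. Since $a$ is maximal-in-range and the remaining bidders report the all-zero type, the returned allocation $A$ maximises $v_i^{A'}$ over $A' \in \mc{R}$; under the same kind of tie-breaking convention flagged after \Cref{mech:ww-revenue-binary} I can further arrange that $|\mu(A)| = \tau$ and that $i$ is the unique lowest-bidding winner in $\mu(A)$, whereupon the mechanism resets $p_i(\bs{b})$ to zero. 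For worst-case NOM I appeal to \Cref{lem:wooden-spoon-general}, whose hypothesis is delivered by the standing feasible-set assumption that for every $(i, b_i)$ there is a bid profile on which $i$ is unallocated.

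\emph{Step 2 (Revenue).} Fix an input $\bs{b}$ and let $A$ denote the returned allocation. All winners in $\mu(A)$ are charged their first-price bids on $A$, except when $|\mu(A)| = \tau$, in which case a single lowest-bidding winner is relieved of payment. The collected revenue is therefore at least $\sum_{j \in \mu(A)} b_j^A - \min_{j \in \mu(A)} b_j^A$. Since the minimum of a set of non-negative numbers is at most their average, $\min_{j \in \mu(A)} b_j^A \le \SW{A,\bs{b}} / |\mu(A)| \le \SW{A,\bs{b}}/\tau$, so the revenue is at least $(1 - 1/\tau) \cdot \SW{A,\bs{b}}$. Applying \eqref{eq:mir-alpha} then gives $\SW{A,\bs{b}} \ge \alpha \cdot \max_{A' \in \mc{A}} \SW{A', \bs{b}}$, and since the optimal revenue is upper bounded by the optimal social welfare, the $\alpha(1 - 1/\tau)$ competitive ratio follows.

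The main obstacle is the verification in Step 1: ensuring that tie-breaking across the inner $\argmax$ and $\argmin$ operations of \Cref{mech:ww-revenue-general} can be arranged so that, at the profile $(v_i, \bs{0}, \ldots, \bs{0})$, the chosen allocation is simultaneously welfare-optimal in $\mc{R}$, of maximum cardinality $\tau$, and places $i$ at the bottom of its winning set. This is the multiparameter analogue of the non-collision concern noted after the binary mechanism, and is handled by the same careful tie-breaking. Step 2 is then a short arithmetic computation.
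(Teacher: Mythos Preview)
Your proposal is correct and mirrors the paper's own treatment: the paper does not give a separate proof for this theorem but simply points to the analogy with \Cref{thm:revenue-bound-binary} together with \Cref{lem:golden-ticket-general,lem:wooden-spoon-general}, which is exactly your two-step structure. The extra care you take with tie-breaking and with the $(1-1/\tau)$ arithmetic in Step~2 only makes explicit what the paper leaves implicit.
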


\section{Frugal Procurement}
\label{sec:frugality}

In this section we consider procurement auctions, where for each agent $i$ and each allocation $A \in \mc{A}$ we have $t_i(A) \le 0$.
For clarity of exposition we now introduce for each agent $i$ a \emph{cost function} and denote it $c_i : \mc{A} \to \R_{\ge 0}$, setting $c_i = -t_i$.
Now we have a procurement auction that allocates chores to the agents, hence the agents must be compensated for taking on their allocation.
For a procurement auction $M = (a,p)$ given the input $\bs{b}$ the utility that agent $i$ with cost $c_i$ will receive is denoted $\ui{i}{M(\bs{b})}{c_i} = p_i(\bs{b}) - c_i(a_i(\bs{b}))$.

We are now interested in allocation functions that (approximately) minimise the social cost of resulting allocation.
Fix a cost profile $\bs{c}$.
The social cost of an allocation $A$ with respect to $\bs{c}$ is $\SC{A} = \sum_{i \in [n]} c_i(A_i)$.
An allocation function $a$ is said to $\alpha$-approximate the social cost for $\alpha \ge 1$ if $\SC{a(\bs{c}),\bs{c}} \le \alpha C^*$ for every $\bs{c}$, where 
$C^* = \min_{A \in \mc{A}} \SC{A,\bs{c}}$ denotes the social cost of an optimal allocation with respect to $\bs{c}$.

We now need to compensate agents for their allocations.
All of our mechanisms must be individually rational, hence each bidder must always be paid at least the cost she reports for the resulting allocation.
We want to study the extent to which our mechanism overpays with respect to some benchmark.
Let $\mc{X}$ be some payment benchmark for a procurement auction and $\mc{X}(\bs{b})$ denote the sum of payments made to the bidders on bid profile $\bs{b}$.
A payment function $p$ is said to be $\beta$-frugal against $\mc{X}$ for $\beta \ge 1$ if $\sum_{i \in [n]} p_i(\bs{b}) \le \beta \mc{X}(\bs{b})$ for all $\bs{b}$.
We refer to $\beta$ as the \emph{frugality ratio} of the payment function (and equally, the mechanism) with respect to $\mc{X}$.

We think of the auctions as allocating bundles of chores, and to maintain individual rationality we must now instead reimburse or compensate the bidders allocated a chore with a monetary transfer.
For consistency we still refer to the set of such bidders as the set of winners, and every other bidder is therefore still considered a loser of the auction.
We also restrict our attention to single-parameter agents.

\begin{lemma}[Golden ticket for binary allocation]
    \label{lem:golden-ticket-frugality}
    Let $M = (a,p)$ be a procurement auction that $\alpha$-approximates the social cost for any $\alpha > 0$.
    Then $M$ is BNOM if and only if for every agent $i \in [n]$ with cost $c_i$ there exists a bid profile where $i$ wins the auction and is paid $h$.
\end{lemma}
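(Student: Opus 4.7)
The plan is to mirror \Cref{lem:golden-ticket-binary-allocation} by translating its argument to the procurement setting. The key swap is that the utility function is now $p_i - c_i$, so the best-case outcome for $i$ is to win while being paid the maximum compensation $h$ (in place of winning for free), and individual rationality now imposes the lower bound $p_i \ge b_i$ whenever $i$ wins with bid $b_i$ (in place of the upper bound $p_i \le b_i$ used in the goods case).

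For the sufficient direction, the existence of a profile on which $i$ bids $c_i$, wins, and is paid $h$ immediately gives a truthful best-case utility of $h - c_i$. Since bids lie in $[0,h]$ and payments to a single winner are bounded above by $h$ in the frugal setting under consideration, no misreport can do better, so \eqref{eq:bnom} holds.

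For the necessary direction I would argue by contraposition. Assume no golden ticket exists, so $U^{*} := \sup_{\bs{b}_{-i}} \ui{i}{M(c_i, \bs{b}_{-i})}{c_i} < h - c_i$. Pick a misreport $b_i \in (U^{*} + c_i,\, h)$ with $\alpha b_i < h$, and consider the profile in which every other bidder reports $h$. Agent $i$ is then strictly the lowest bidder, so any allocation excluding $i$ but containing some $j \ne i$ has cost at least $h > \alpha b_i$, while allocating $i$ alone costs $b_i$. Assuming $\mc{A}$ does not contain the empty allocation (otherwise the golden ticket property is vacuous), an $\alpha$-approximation to the social cost must therefore return $i$ as a sole winner; individual rationality then gives $p_i(b_i, \bs{b}_{-i}) \ge b_i$ and hence a misreport utility of at least $b_i - c_i > U^{*}$, contradicting BNOM.

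The main obstacle is the joint feasibility of the two requirements $b_i > U^{*} + c_i$ and $\alpha b_i < h$, which together demand $U^{*} + c_i < h/\alpha$. For $\alpha = 1$ this is automatic from the hypothesis $U^{*} < h - c_i$, but for $\alpha > 1$ the slack may vanish and a direct contradiction for $b_i$ close to $h$ is not immediate. I would close the gap either by a limit argument that pushes the achievable lying utility toward $h - c_i$ as $b_i \to h/\alpha$, or by tailoring the other bidders' reports in $\bs{b}_{-i}$ so that the mechanism is forced to select an allocation with $i$ as a lone winner for every $b_i$ approaching $h$. Both fixes stay within the spirit of the goods-case proof and require no new structural assumptions beyond those already imposed in \Cref{sec:preliminaries}.
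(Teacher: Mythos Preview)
Your argument is essentially the paper's own proof: the sufficient direction is the same one-line observation that $h - c_i$ is unbeatable, and for necessity the paper likewise supposes the maximum truthful payment is some $p < h$, picks an overbid $b_i$ with $p < b_i < h/\alpha$, sets the other bidders above $\alpha b_i$ so that $i$ must win alone, and invokes individual rationality to get utility $b_i - c_i > p - c_i$.

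The technical obstacle you flag---that one simultaneously needs $b_i > p$ (your $U^{*}+c_i$) and $b_i < h/\alpha$, which fails when $p \in [h/\alpha,h)$---is real, and the paper's proof does not address it either; it simply writes ``take an overbid $b_i$ such that $p < b_i < h/\alpha$'' without checking the interval is nonempty. So you are being more careful than the paper, not less, and either of your proposed patches would in fact strengthen the published argument.
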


\begin{proof}
    ($\implies$)
    Suppose for contradiction that $M$ only ever pays bidder $i$ at most $p < h$ when she bids $c_i$.
    By individual rationality we must have $p \ge c_i$.
    Now take an overbid $b_i$ such that $p < b_i < h / \alpha$.
    Let $\bs{b} = (b_i,\bs{b}_{-i})$ be a bid profile such that $b_j > \alpha b_i$ for each $j \neq i$.
    On $\bs{b}$ the chore must get allocated to $i$ in order to preserve the $\alpha$ guarantee to the optimal social cost.
    Since $i$ is allocated the chore she must be compensated at least $b_i$, resulting in a best case dishonest utility of at least $b_i - c_i$ which is strictly greater than her best case truthful utility of $p - c_i$, thus violating BNOM.
    
    ($\impliedby$)
    If there exists a bid profile where $i$ wins and is paid $h$ then when $i$ has cost $c_i$ and bids truthfully her best utility is $h - c_i \ge 0$.
    Since $i$ can never submit an overbid greater than $h$ then no misreport can get her a greater utility, thus $M$ is BNOM.
\end{proof}

\begin{lemma}[Wooden spoon for binary allocation]
    \label{lem:wooden-spoon-frugality}
    Let $M = (a,p)$ be a procurement auction that is $\beta$-frugal against the social cost for any $\beta > 0$.
    Then $M$ is WNOM if and only if for every agent $i \in [n]$ with cost $c_i$ there exists a bid profile such that $i$ loses the auction.
\end{lemma}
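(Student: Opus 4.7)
The plan is to mirror the structure of the revenue-side wooden spoon lemma (\Cref{lem:wooden-spoon-binary-allocation}), adapting the utility comparisons to the procurement setting. As in the revenue version, the $(\Leftarrow)$ direction uses a direct utility analysis, while the $(\Rightarrow)$ direction exploits the frugality bound to construct a truthful profile on which $i$'s payment is small enough to be beaten by a profitable overbid.

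For sufficiency $(\Leftarrow)$, I fix agent $i$ with cost $c_i$. When $i$ bids truthfully her utility is non-negative on every profile by individual rationality, and the hypothesised losing profile gives utility exactly $0$, so her worst-case truthful utility equals $0$. For any misreport $b_i$, applying the hypothesis to the value $b_i$ produces a profile on which $i$ loses when bidding $b_i$, yielding utility $0$. Hence the worst-case dishonest utility is at most $0$, matching the worst truthful utility, so WNOM holds.

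For necessity $(\Rightarrow)$, I argue the contrapositive: suppose some agent $i$ with positive cost $c_i$ always wins regardless of the bid profile. On the profile $(c_i, c_i, \ldots, c_i)$ the optimal social cost is $c_i$, so $\beta$-frugality together with non-negativity of payments yields $p_i \le \beta c_i$, bounding the worst-case truthful utility by $(\beta - 1) c_i$. I then choose an overbid $b_i$ with $\beta c_i < b_i \le h$; since $i$ always wins and individual rationality on the reported bid forces $p_i \ge b_i$, the worst-case dishonest utility is at least $b_i - c_i > (\beta - 1) c_i$, contradicting WNOM. The main obstacle is the boundary regime $\beta c_i \ge h$, where no overbid strictly exceeds $\beta c_i$. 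I expect this can be resolved by switching the reference profile to $(c_i, 0, \ldots, 0)$: there the optimal social cost is $0$, so $\beta$-frugality caps the total payments at $0$, while truthful IR forces $p_i \ge c_i > 0$, directly contradicting $\beta$-frugality and already ruling out the possibility that $i$ always wins in this regime.
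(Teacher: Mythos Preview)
Your proof is correct and follows essentially the same route as the paper's. The $(\Leftarrow)$ direction is identical in spirit (losing gives worst-case utility $0$ for every bid), and the $(\Rightarrow)$ direction has the same skeleton: bound $p_i$ via $\beta$-frugality on a carefully chosen profile, then exhibit an overbid whose worst-case utility beats that bound. The only substantive difference is the reference profile: the paper uses $(c_i,h,\ldots,h)$, which makes $i$ the unique cheap bidder and pins the social-cost benchmark at $c_i$ directly, whereas your $(c_i,c_i,\ldots,c_i)$ achieves the same bound but only once one notes that a singleton feasible allocation exists (you implicitly use this when asserting the optimal social cost equals $c_i$). Your explicit treatment of the boundary regime $\beta c_i \ge h$ via the profile $(c_i,0,\ldots,0)$ is a point the paper glosses over; note that this argument actually shows $\beta$-frugality is already violated there, so that branch of the contrapositive is vacuous rather than a genuine WNOM violation, which is logically fine.
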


\begin{proof}
    ($\implies$)
    Assume $i$ always wins the auction.
    Then on an overbid $b_i > c_i$ she would get utility at least $b_i - c_i$.
    Now take the bid profile $(c_i, h, \ldots, h)$ such that $(n-1) h / \beta > c_i$ where only $i$ may win the auction.
    The $\beta$-frugality of the auction means that $i$ cannot be compensated more than $\beta c_i$ meaning her best truthful utility is at most $(\beta-1) c_i$.
    Taking $b_i > \beta c_i$ results in a worst utility of at least $b_i - c_i > (\beta-1) c_i$ thus breaking the WNOM constraints.
    Therefore $i$ must be able to lose the auction.
    
    ($\impliedby$)
    By a similar argument to the proof of \Cref{lem:golden-ticket-frugality} if the monetary payment is independent of the bid then $i$'s worst truthful and dishonest utilities are both $0$, satisfying WNOM. 
\end{proof}

We use the above conditions to design frugal Willy Wonka mechanisms for procurement auctions.
Similarly to \eqref{eq:optimal-ww-revenue} fix a player $i$ with bid $b_i$.
On the golden ticket profile for $i$ the optimal Willy Wonka mechanism for frugal procurement should select an allocation $A$ that compensates the agents by
\begin{equation*}
    \label{eq:optimal-ww-frugality}
    \min_{A \, : \, i \in \mu(A)} \set{ \min_{\bs{b}_{-i}} \, \SC{A,\bs{b}_{-i}} + h },
\end{equation*}
since $i$ will be paid $h$ while the remaining winners are paid their bid.
To obtain a good approximation to the above, again consider a mechanism with black box access to an $\alpha$-approximate allocation function such that on input $\bs{b}$ the mechanism selects an $\alpha$-optimal allocation $A$ that pays the highest winner $h$ and the remaining winners their bid.
Now the highest winning bidder (instead of the lowest) is the recipient of the golden ticket, since this minimises the total payment made on the given instance.
Then the total cost on allocation $A$ is equal to
\begin{equation*}
    \label{eq:second-best-frugality}
    \alpha \cdot \SC{A,\bs{b}} + h - \max_{i \in \mu(A)} b_i.
\end{equation*}

Again, we need not designate a golden ticket profile (giving the highest winning bidder a payment of $h$) for every bid profile $\bs{b}$ as this will degrade the frugality ratio with respect to the sum of (winning) bids.
Consider an analogous approach to \Cref{mech:ww-revenue-binary,mech:ww-revenue-general} for designing frugal Willy Wonka procurement auctions, where the golden ticket profile occurs for each bidder when they have the highest reported cost in an (approximately) optimal allocation of maximum winning set cardinality.
Suppose $\mc{A}$ contains all allocations with singleton winning sets.
Now fix bidder $i$ and take $\bs{b} = (b_i,h,\ldots,h)$ for some $b_i$.
Observe that the optimal cost of allocating only $i$ is $b_i$, and at least $h$ for any other allocation.
For a given $\alpha$ then a sufficiently small $b_i$ means $i$ must be the sole winner.
If the mechanism paid $h$ to bidder $i$ then the resulting frugality ratio would be infinite when we take $b_i$ to zero.

We provide an alternative way of specifying the golden ticket profiles of each bidder in \Cref{mech:ww-frugal-binary-2} and show that although its frugality ratio with respect to the optimal can still be infinite, it will never overpay the bidders by more than twice the cost of the second-best solution, provided there exists an allocation for each agent where they are not the sole winner.
Let $k = \max_{A \, : \, i \in \mu(A)} |\mu(A)|$ denote the maximum cardinality of winning set containing $i$ for any allocation in $\mc{A}$, and let $S$ be the corresponding winning set of any such allocation.
Given bid $b_i$ of player $i$ define the golden ticket profile such that for all $j \neq i$ we have $b_j = h/(k-1)$ whenever $j \in S$ and $b_j = h$ otherwise.
Denote this golden ticket for $b_i$ as $\gamma_i(b_i)$.

\begin{mechanism}[h]
    \caption{Willy Wonka procurement auction.}
    \label[mechanism]{mech:ww-frugal-binary-2}
    \begin{algorithmic}
        \REQUIRE Bid profile $\bs{b} \in D$.
        \ENSURE Allocation $a(\bs{b}) \in \mc{A} \subseteq \{0,1\}^n$, payments $p(\bs{b}) \in \R_{\ge 0}^n$.
        \STATE Let $A$ be an $\alpha$-optimal allocation w.r.t. $\bs{b}$ and let $\mu = \mu(A)$.
        \FOR{each bidder $i \in [n]$}
        \STATE $k \gets \max_{A \, : \, i \in \mu(A)} |\mu(A)|$ and $h' \gets h/(k-1)$ (when $k>1$)
        \medbreak
        \STATE $\gamma_i(b_i) \gets ( \underbrace{ h',\ldots,h'}_{(k-1)},\underbrace{h,\ldots,h}_{(n-k)} )$
        \medbreak
        \STATE $a_i(\bs{b}) \gets A_i$
        \STATE $p_i(\bs{b}) \gets b_i \cdot A_i$
        \medbreak
        \IF{$\bs{b}_{-i} = \gamma_i(b_i)$}
            \STATE $p_i(\bs{b}) \gets 0$
        \ENDIF
        \ENDFOR
        \RETURN $a(\bs{b}), p(\bs{b})$
    \end{algorithmic}
\end{mechanism}

We note that the frugality ratio for \Cref{mech:ww-frugal-binary-2} can vary depending on the size $k$ of the set $S$.
If $k > 1$ then the total payments made on the bid profile $(b_i,\gamma_i(b_i))$ is $h + (k-1) h / (k-1) = 2h$.
When $b_i = 0$ then the optimum cost is $h$, giving a frugality ratio of $2$.
If $k = 1$ then the sum of payments is $h$ while the optimum is zero, meaning the frugality ratio is infinite.

Instead let us benchmark the mechanism again the cost of the second-best solution.
We denote by $\text{FR}^{(2)}(\bs{b})$ the frugality ratio of a given mechanism on input $\bs{b}$ with respect to the cost of the second-best solution:
\begin{equation*}
    \label{eq:FR2}
    \text{FR}^{(2)}(\bs{b}) =
    \frac{\sum_{i \in [n]} p_i(\bs{b})}
         {\min_{A} \set{\SC{A,\bs{b}}}{\SC{A,\bs{b}} > C^*}},
\end{equation*}
where $C^* = \min_A \SC{A,\bs{b}}$ denotes the cost of an optimum allocation on $\bs{b}$.
We then take $\text{FR}^{(2)} = \max_{\bs{b}} \text{FR}^{(2)}(\bs{b})$.
With the following we show that \Cref{mech:ww-frugal-binary-2} has good performance with respect to $\text{FR}^{(2)}$ as long as there exists for each agent an allocation where they are not the only winning bidder.

\begin{claim}
    The frugality ratio $\text{FR}^{(2)}$ for \Cref{mech:ww-frugal-binary-2} is equal to $2$ whenever $k > 1$, and $1$ otherwise.
\end{claim}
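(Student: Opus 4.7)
The idea is to partition the bid space $D$ into profiles that coincide with some agent's golden-ticket profile $(b_i,\gamma_i(b_i))$ and those that do not, compute the mechanism's total payment together with the second-best social cost on each part, and then both bound and attain the stated ratios.

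For any $\bs{b}$ that is not a golden-ticket profile for any agent, the override in \Cref{mech:ww-frugal-binary-2} does not fire, so each winning bidder is paid exactly her bid and the total payment equals the cost $\SC{a(\bs{b}),\bs{b}}$ of the $\alpha$-optimal allocation $a(\bs{b})$. Because this cost is no larger than that of the second-best allocation, the local ratio is at most $1$, so such profiles cannot be the binding case in either regime.

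It therefore remains to analyse $\bs{b} = (b_i,\gamma_i(b_i))$ by direct inspection of the structure of $\gamma_i(b_i)$. If $k=1$ then the only winning set containing $i$ is $\{i\}$ itself and every other agent bids $h$; the mechanism pays $h$ to $i$ and nothing to anyone else, while any feasible allocation not including $i$ must have cost at least $h$, so both numerator and denominator equal $h$ and the local ratio is exactly $1$. If $k>1$ the mechanism returns the maximum-cardinality winning set $S$ of size $k$ containing $i$, compensates $i$ by $h$ on the golden ticket and pays each of the remaining $k-1$ winners her bid $h/(k-1)$, for a total of $h + (k-1)\cdot h/(k-1) = 2h$; combined with a cost of $h$ for the second-best allocation on $\gamma_i(b_i)$ this gives local ratio exactly $2$. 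Since these ratios are attained on bona fide bid profiles and dominate the non-golden-ticket case, taking the supremum over $\bs{b} \in D$ yields the claimed value of $\text{FR}^{(2)}$.

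The main obstacle is pinning down the second-best cost on the golden-ticket profiles when $k>1$, since a priori some intermediate feasible allocation could have cost strictly between the optimum and $h$ and thereby tighten the denominator; I would handle this by a case analysis on $b_i$, using the fact that every allocation realising cost less than $h$ must be drawn from agents in $S$ and that the maximality of $k$ constrains how many such low-cost candidates can coexist. Once this second-best cost is identified as $h$, the ratio computation is a routine substitution.
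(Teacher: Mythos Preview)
Your high-level decomposition into golden-ticket versus non-golden-ticket profiles is sound and in fact goes beyond the paper, which only analyses the golden-ticket inputs $(b_i,\gamma_i(b_i))$ and silently treats the remaining profiles as non-binding.

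There is, however, a genuine gap in your $k>1$ case. You assert that on $(b_i,\gamma_i(b_i))$ the mechanism ``returns the maximum-cardinality winning set $S$ of size $k$ containing $i$''. Nothing in \Cref{mech:ww-frugal-binary-2} forces this: the allocation step simply picks an $\alpha$-optimal set for social cost, and on $\gamma_i(b_i)$ the set $S$ has cost $b_i + h$, which can easily be beaten by a singleton $\{i\}$ (cost $b_i$) or $\{j\}$ for $j\in S\setminus\{i\}$ (cost $h/(k-1)$) when such allocations are feasible. The paper does not pin down which $A$ is returned; instead it cases on whether $i\in\mu(A)$, arguing that in either case the total payment is at most $2h$ (agent $i$ receives $h$ via the golden ticket, and the remaining winners' bids sum to at most $h$), and then combines this with the observation that the second-best cost is at least $h$. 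Replacing your equality claim with this case split and the resulting upper bound would close the gap.

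A smaller point: your non-golden-ticket argument uses that the $\alpha$-optimal cost is ``no larger than that of the second-best allocation''. This is only automatic when $\alpha=1$; for general $\alpha$ the returned allocation can overshoot the second-best cost. The paper avoids the issue by not treating these profiles at all, so if you keep your cleaner partition you should either add the $\alpha=1$ hypothesis explicitly or argue separately that these inputs are dominated by the golden-ticket ratio.
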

\begin{proof}
    Let $A = a(\bs{b})$ be the allocation returned by $a$ on input $\bs{b} = (b_i,\gamma_i(b_i))$.
    For all cases where $k \ge 1$ then the cost of the second-best solution is at least $h$.
    When $k > 1$ then if $i \in \mu(A)$ the total sum of payments is at most $h + (k-1)h/(k-1) = 2h$, otherwise if $i \notin \mu(A)$ then the total sum of payments is simply $h$.
    When $k = 1$ then $\gamma_i(b_i) = (h,\ldots,h)$ by definition, and whoever wins is paid $h$, either because $i$ wins the golden ticket or because some bidder $j \neq i$ wins and must be compensated at least her bid.
\end{proof}

\section{Conclusion}
\label{sec:conclusion}

In this work we study NOM auctions for revenue-maximisation and frugal procurement.
We characterise the set of incentive-compatible mechanisms by introducing two design criteria that guarantee the resulting mechanism is NOM (in the right context), namely golden tickets and wooden spoons.
The former ensures a profile where each bidder can maximise her utility when reporting her type truthfully, and the latter ensures her worst utility when bidding truthfully is zero, and taken together they ensure the best-case and worst-case NOM constraints are always satisfied.
Given black-box access to an (approximately) optimal allocation function we then show that we can design simple payment rules that provide good performance with respect to either maximising the sum of payments extracted from the bidders or minmising the compensation paid.
These payment rules work by selecting, for an optimal allocation with the maximum (across all allocations) number of winning bidders, a single bidder whose utility function we will maximise.

There are several directions for further research left open by our work.
The performance of the payment rules we design is dependent on the number of winners in an optimal allocation.
Can we design payment rules whose performance does not depend on the structure of the feasible set, to better deal with instances where this results in poor performance for the mechanism?
We also note that, given a bid $b_i$ of player $i$, the best and worst outcomes of the mechanism always occur on the same profile, regardless of her true type.
These are the golden ticket and wooden spoon profiles, respectively, and the resulting mechanisms are therefore \emph{single-line} \cite{AdKV2023a}.
These payment schemes are simple to state.
Are golden ticket and wooden spoon payment rules the only types of single-line payment rules leading to good performance from the mechanism?
NOM also permits much more general payment schemes -- can we design non single-line payment rules that yield better performance that can be achieved by a single-line mechanism?
Finally, can we design (efficient) allocation functions ourselves for general settings so that our mechanisms do not rely on them as black boxes, and can we use in conjunction with an appropriate payment rule to yield a NOM mechanism?

\bibliographystyle{apalike}
\bibliography{references}

\begin{thebibliography}{}

\bibitem[Aggarwal and Hartline, 2006]{AggarwalHartline2006}
Aggarwal, G. and Hartline, J.~D. (2006).
\newblock Knapsack auctions.
\newblock In {\em Proceedings of the Seventeenth Annual {ACM-SIAM} Symposium on
  Discrete Algorithms, {SODA} 2006, Miami, Florida, USA, January 22-26, 2006},
  pages 1083--1092. {ACM} Press.

\bibitem[Archbold et~al., 2023a]{AdKV2023a}
Archbold, T., de~Keijzer, B., and Ventre, C. (2023a).
\newblock Non-obvious manipulability for single-parameter agents and bilateral
  trade.
\newblock In Agmon, N., An, B., Ricci, A., and Yeoh, W., editors, {\em
  Proceedings of the 2023 International Conference on Autonomous Agents and
  Multiagent Systems, {AAMAS} 2023, London, United Kingdom, 29 May 2023 - 2
  June 2023}, pages 2107--2115. {ACM}.

\bibitem[Archbold et~al., 2023b]{AdKV2023b}
Archbold, T., de~Keijzer, B., and Ventre, C. (2023b).
\newblock Non-obvious manipulability in extensive-form mechanisms: The
  revelation principle for single-parameter agents.
\newblock In {\em Proceedings of the Thirty-Second International Joint
  Conference on Artificial Intelligence, {IJCAI} 2023, 19th-25th August 2023,
  Macao, SAR, China}, pages 2503--2510. ijcai.org.

\bibitem[Arribillaga and Bonifacio, 2024]{Arribillaga2024}
Arribillaga, R.~P. and Bonifacio, A.~G. (2024).
\newblock Obvious manipulations of tops-only voting rules.
\newblock {\em Games and Economic Behavior}, 143:12--24.

\bibitem[Aziz and Lam, 2021]{Aziz2021}
Aziz, H. and Lam, A. (2021).
\newblock Obvious manipulability of voting rules.
\newblock In Fotakis, D. and Insua, D.~R., editors, {\em Algorithmic Decision
  Theory - 7th International Conference, {ADT} 2021, Toulouse, France, November
  3-5, 2021, Proceedings}, volume 13023 of {\em Lecture Notes in Computer
  Science}, pages 179--193. Springer.

\bibitem[Dur et~al., 2018]{Dur2018}
Dur, U., Hammond, R.~G., and Morrill, T. (2018).
\newblock Identifying the harm of manipulable school-choice mechanisms.
\newblock {\em American Economic Journal: Economic Policy}, 10(1):187--213.

\bibitem[Goldberg et~al., 2006]{Goldberg2006}
Goldberg, A.~V., Hartline, J.~D., Karlin, A.~R., Saks, M., and Wright, A.
  (2006).
\newblock Competitive auctions.
\newblock {\em Games and Economic Behavior}, 55(2):242--269.

\bibitem[Li, 2017]{Li2017}
Li, S. (2017).
\newblock {Obviously strategy-proof mechanisms}.
\newblock {\em American Economic Review}, 107(11):3257--3287.

\bibitem[Ortega and Segal{-}Halevi, 2022]{Ortega2022}
Ortega, J. and Segal{-}Halevi, E. (2022).
\newblock Obvious manipulations in cake-cutting.
\newblock {\em Soc. Choice Welf.}, 59(4):969--988.

\bibitem[Pathak and S{\"o}nmez, 2008]{Pathak2008}
Pathak, P.~A. and S{\"o}nmez, T. (2008).
\newblock Leveling the playing field: Sincere and sophisticated players in the
  boston mechanism.
\newblock {\em American Economic Review}, 98(4):1636--1652.

\bibitem[Psomas and Verma, 2022]{Psomas2022}
Psomas, A. and Verma, P. (2022).
\newblock Fair and efficient allocations without obvious manipulations.
\newblock In {\em NeurIPS}.

\bibitem[Quadir, 2017]{Quadir2016}
Quadir, A. (2017).
\newblock Spanning tree auctions: {A} complete characterization.
\newblock {\em Math. Soc. Sci.}, 86:1--8.

\bibitem[Troyan and Morrill, 2020]{Troyan2020}
Troyan, P. and Morrill, T. (2020).
\newblock Obvious manipulations.
\newblock {\em Journal of Economic Theory}, 185:104970.

\end{thebibliography}

\end{document}